\long\def\todo#1 { {\bf TODO:} [{\color{gray} #1}] }
\newcommand{\eg}{{\em e.g.}\xspace}
\newcommand{\ie}{{\em i.e.}\xspace}
\newcommand{\Ph}{\mathcal{P}_h}
\newcommand{\Pm}{\mathcal{P}_m}
\newcommand{\Sh}{\texttt{S}_h}
\newcommand{\Sm}{\texttt{S}_m}
\newcommand{\Smp}{\texttt{S}'_{m}}
\newcommand{\Rh}{\mathcal{R}_{h}}
\newcommand{\Rm}{\mathcal{R}_m}
\newcommand{\Rhi}{\mathcal{R}_{h,i}}
\newcommand{\Rdi}{\mathcal{R}_{d,i}}
\newcommand{\Rdpi}{\mathcal{R}_{d',i}}
\newcommand{\Rmi}{\mathcal{R}_{m,i}}
\newcommand{\Dm}{\mathcal{D}_m}
\newcommand{\Di}{\mathcal{D}_i}
\newcommand{\Oracle}{\mathcal{O}}
\newcommand{\order}{order}
\newcommand{\N}{\mathcal{N}}
\newcommand{\Nh}{\mathcal{N}_h}
\newcommand{\Nm}{\mathcal{N}_m}
\newcommand{\Tn}{T_{now}}
\newcommand{\Te}{T_e}
\newcommand{\vi}{v_i}
\newcommand{\Vh}{V_h}
\newcommand{\Vm}{V_m}
\newcommand{\Vmp}{V'_m}
\newcommand{\threshold}{t}
\newcommand{\BSC}{\textit{Bribery Smart Contract}\xspace}
\newcommand{\bsc}{bribery smart contract\xspace}
\newcommand{\Bsc}{Bribery smart contract\xspace}
\newcommand{\DS}{Deposit-Slashing\xspace}
\newcommand{\ds}{deposit-slashing\xspace}
\newcommand{\PoW}{Proof-of-Work\xspace}
\newcommand{\PoS}{Proof-of-Stake\xspace}
\newcommand{\NE}{Nash Equilibrium\xspace}
\newcommand{\Ne}{Nash equilibrium\xspace}
\newcommand{\power}{power\xspace}
\newcommand{\powert}{power threshold\xspace}
\newcommand{\cns}{nodes\xspace}
\newcommand{\cn}{node\xspace}
\newcommand{\magnate}{magnate\xspace}
\newcommand{\minion}{minion\xspace}
\newcommand{\minions}{minions\xspace}
\newcommand{\bc}{block\-chain\xspace}
\newcommand{\Sys}{\ensuremath{S}\xspace}
\newcommand{\Sprime}{\ensuremath{S'}}
\newcommand{\utility}[3]{
\ensuremath{\text{Utility}^{#1}_{#2}(#3)}}
\begin{document}
%
% \title{Rational Attacks on Rationality: \\
% Breaking Rationality in Blockchain Consensus}

% \title{Breaking Blockchain Rationality\\
% by Out-of-Band Collusion Smart Contract}

\title{Breaking Blockchain Rationality \\
with Out-of-Band Collusion}
%
%\titlerunning{Breaking Blockchain Rationality}
% If the paper title is too long for the running head, you can set
% an abbreviated paper title here
%
% \author{First Author\inst{1}\orcidID{0000-1111-2222-3333} \and
% Second Author\inst{2,3}\orcidID{1111-2222-3333-4444} \and
% Third Author\inst{3}\orcidID{2222--3333-4444-5555}}
% %
% \authorrunning{F. Author et al.}

\author{Haoqian Zhang\inst{1} \and
Mahsa Bastankhah\inst{1} \and
Louis-Henri Merino\inst{1} \and \\
Vero Estrada-Galiñanes\inst{1} \and  
Bryan Ford\inst{1}}
\authorrunning{Haoqian Zhang et al.}
% First names are abbreviated in the running head.
% If there are more than two authors, 'et al.' is used.
%
\institute{École polytechnique fédérale de Lausanne(EPFL)
\email{{haoqian.zhang,mahsa.bastankhah,louis-henri.merino,vero.estrada,bryan.ford}@epfl.ch}}
\maketitle              % typeset the header of the contribution

\begin{abstract}

Blockchain systems often rely on rationality assumptions for their security,
expecting that \cns are motivated to 
maximize their profits.
These systems thus design their protocols to
incentivize \cns to execute the honest protocol
but fail to consider out-of-band collusion.
Existing works analyzing rationality assumptions
are limited in their scope, either by focusing
on a specific protocol or relying on non-existing
financial instruments.
We propose a general rational attack on rationality by 
leveraging an external channel 
that incentivizes \cns to collude against the honest protocol.
Our approach involves
an attacker creating an out-of-band \bsc
to motivate \cns to double-spend their transactions
in exchange for shares in the attacker's profits.
We provide a game theory model to prove that
any rational \cn is incentivized to follow the malicious protocol.
We discuss our approach to attacking 
the Bitcoin and Ethereum blockchains,
demonstrating that irrational behavior can be rational
in real-world blockchain systems
when analyzing rationality in a larger ecosystem.
We conclude that rational assumptions only appear
to make the system more secure
and offer a false sense of security
under the flawed analysis.

%\keywords{First keyword  \and Second keyword \and Another keyword.}
\end{abstract}
\section{Introduction}

% \zhq{
% \begin{itemize}
%     \item S is used for system and strategy.
% \end{itemize}
% }

Blockchain systems often rely on rationality assumptions
to ensure their security by providing financial incentives
for adhering to the honest protocol.
For example, in Proof-of-Work,
miners are incentivized to work on the longest chain
as it increases their expected chances of 
having their blocks accepted in the blockchain.
Similarly, in Proof-of-Stake, 
such as the one recently adopted by Ethereum~\cite{ethereumMerge},
validators are disincentivized from malicious behavior,
such as signing two blocks with the same height, 
due to the loss of part of their deposits.
These incentive mechanisms seem to secure these systems
as any entity deviating from the honest protocol would have a lower or negative expected return.

However, as many previous works 
demonstrated~\cite{eyal2018majority,liao2017incentivizing,Babaioff2012,Houy2014,Eyal2015,Know2017},
those mechanisms might not be incentive-compatible,
\ie, there exists a more profitable alternative strategy 
that deviates from the honest protocol.
For instance,
selfish mining is a strategy to increase miners' expected return
by deviating from the longest-chain rule 
expected by the Bitcoin mining protocol~\cite{eyal2018majority}.
Whale attacks incentivize
miners to fork the chain
to include an off-the-blockchain transaction
with a substantial transaction fee~\cite{liao2017incentivizing}.

Whereas those previous works focus on specific protocols
within individual blockchain systems,
we question the incentive mechanism at a meta-level:
Are those blockchain systems rely on rationality assumptions secure in general?
We try to answer this research question by
considering attacks beyond their ecosystem
taking into account the broader influences of the outside world on the system.
What is considered irrational behavior within their ecosystem
might be rational when analyzing rationality in the context
of a larger ecosystem.

We demonstrate that rationality
assumptions can be defeated by attacks driven by rationality. 
Specifically, an attacker creates an out-of-band \bsc
that incentivizes \cns to double-spend the attacker's transactions.
In return, the attacker can then share the profits from the double-spending
with colluded consensus nodes,
offering a financial incentive for them to commit the attack 
in the first place.
% Given the external \bsc,
% we prove that any rational \cn is incentivized
% to follow the malicious protocol
% rather than execute the honest protocol
% to improve their profits.

A closely related work by Ford and Böhme~\cite{ford2019rationality}
also offer a general rational attack on rationality.
However, their attack method relies on financial instruments
that are either non-existent or not well-established
in the cryptocurrency markets.
We, on the other hand, 
eliminate the need for non-existent financial instruments
and significantly relaxes the requirements to launch the attack.

To prove that out-of-band collusion breaks blockchain systems' 
rationality assumptions,
we propose a game theory model and use it to analyze a blockchain
system before and after launching our attack.
We find that in the absence of the attack,
following the honest protocol is a strict \Ne 
that discourages nodes from deviating;
however, in the presence of our attack, 
the honest protocol becomes a weakly dominated strategy. 
In particular, 
we identify a finite sequence of deviations from 
the honest protocol where each deviating \cn
obtains at least the same reward as before the deviation. 
This sequence ultimately leads to a state where all the \cns follow our attack. 
Furthermore, we prove that following our attack is a strict \Ne,
thus disincentivizing further deviation.

We provide an outline of the steps required
to break the longest-chain rule in Bitcoin
and the \ds protocol in Ethereum.
Our work implies that 
rationality assumptions only appear
to make the system more secure
and provide a false sense of security.

% In this paper, our key contributions are as follows:
% \begin{enumerate}
%     \item XX
%     \item XX
%     \item XX
% \end{enumerate}
\section{Assumptions Underlying the Attack}
\label{sec:assumptions}

This section introduces the following assumptions for our attack model:

\subsubsection{Assumption 1:}
\label{Assumption_blockchain}
We consider the target system \Sys{} to 
be an open financial payment network operating
on blockchain rails, where any client can initiate a transaction.
\Sys{} is maintained by a set of 
rational \cns $\N=\{1,2,\dots,n\}$
who seek to maximize their profits.
We assume that each \cn, $i \in \N$, 
has the \power of $\vi$,
\ie, the voting \power to decide the next block in the blockchain system.
For example, the voting \power in a \PoW blockchain is the 
\cns' computational power and the voting \power
in a \PoS blockchain is \cns' stake amount,
whereas the voting power in a practical Byzantine Fault Tolerance(PBFT) blockchain 
is the existence of an approved node.
We normalize the \power distribution such that the sum of all the nodes' power is equal to 1: $\sum_{i=1}^{n} \vi = 1$.
For simplicity, 
we assume that the number of \cns and their \power distribution 
remains constant;
however, our model also applies to 
the dynamic number of \cns with smooth \power changes.

\subsubsection{Assumption 2:}
\label{Assumption_outside}
We assume the existence of an open system \Sprime{}
that supports smart contracts and 
has access to a perfect oracle mechanism $\Oracle$
that can access real-time state information on \Sys{}
without manipulation.
To avoid \Sprime{} and $\Oracle$ being attacked by the same rational attack,
we assume that \Sprime{} and $\Oracle$ do not rely on any rationality assumption,
and their security assumptions hold. 
For example,
\Sprime{} could be a PBFT-styled blockchain,
where at most $f$ of $3f+1$ \cns can fail or misbehave,
and $\Oracle$ can solely rely on trusted hardware~\cite{costan2016intel} to provide truthful information from \Sys{}.

\subsubsection{Assumption 3:}
\label{Assumption_honest}
The system \Sys{} leverages, in some fashion, rationality assumptions
to incentivize nodes to follow the \Sys-defined honest protocol $\Ph$.
Mathematically, we assume there is a well-known 
\powert $\threshold$ such that,
within a time period,
if $\Nh \subset \N$ with 
$\sum_{i \in \Nh} \vi > \threshold$ follows the honest protocol $\Ph$,
for $i \in \Nh$ expects to receive a reward of $\Rhi > 0$,
and for $i \notin \Nh$ expects to obtain a reward $\Rdi$.
We assume that $\forall i \in \N, \Rdi<\Rhi$.
$\Rdi$ can be negative, 
\ie, a \cn receives punishment for deviating from $\Ph$.

\subsubsection{Assumption 4:}
\label{Assumption_malicious}
We assume the existence of a malicious protocol $\Pm$ 
that differs from the expected behavior such that,
within the same time period,
if $\Nm \subset \N$ with 
$\sum_{i \in \Nm} \vi > \threshold$ follows the malicious protocol $\Pm$,
for $i \in \Nm$ can expect to receive a reward of $\Rmi$,
and for $i \notin \Nm$ can expect to obtain a reward of $\Rdpi$.
We assume that $\forall i \in \N, \Rdpi<\Rmi$ and $\Rmi>\Rhi$
as the malicious protocol is only worthwhile for attackers if it provides them with greater rewards.
In Section \ref{subsec:pmal},
we show that there always exists a malicious protocol 
capable of double-spending attacks to satisfy this assumption
in real-world blockchain systems.

\subsubsection{Assumption 5:}
\label{Assumption_practical}
We assume that the underlying consensus requires $t \geq \frac{1}{2}$
to avoid \cns split into two independent functional subsets.
We also assume that no single \cn can abuse the system, meaning
that
$\forall i \in \N, \vi<\threshold$.
For simplicity,
we assume that if neither $\Ph$ nor $\Pm$ has enough \cns to execute,
\Sys{} loses liveness, and nobody gets any reward.
\section{Rational Attack on Rationality}
\label{sec:attack}

This section presents our attack on rationality at a high-level.
We begin by demonstrating that
no rational \cn would execute $\Pm$ without collusion.
We then introduce an attacker who creates a \BSC
on \Sprime{} that incentivizes the \cns on \Sys{} to launch the attack.

\subsubsection{Without Collusion:}
In the absence of collusion between \cns,
each \cn is incentivized to follow the honest protocol $\Ph$;
%
%Suppose every \cn executing the $\Ph$,
no single rational \cn will deviate from $\Ph$ as the expected reward
is lower than that of following $\Ph$
($\Rdi<\Rmi$ in Assumption 3).
Therefore, when there is no collusion,
\Sys{} is secure under the rational assumption
(we present a game theory analysis in Section \ref{subsec:game0}).
However, one cannot optimistically assume 
that such collusion will not exist.

\subsubsection{Magnate-Coordinated Collusion:}

\begin{algorithm}[t!]
\caption{\Bsc to incentivize collusion}
\label{alg:smartcontract}

\SetCommentSty{textnormal}
\DontPrintSemicolon
\SetKwInOut{Input}{input}
\SetKwFor{For}{for}{}{end}
\SetKwProg{Init}{Init}{:}{}
\SetKwProg{Commit}{Commit}{:}{}
\SetKwProg{Distribute}{Distribute}{:}{}
\SetKwProg{Attack}{Attack}{:}{}

\BlankLine
\Init{Upon creating the bribery smart contract}{
    Set $\Te$ as the expiration time\; 
    Set $\Pm$ as the malicious protocol\;
    Deposit $\Dm$ by the \magnate\;
    $\Nm \gets \varnothing$\;
    $\order \gets \Ph$
}
\;
\Commit{Upon receiving \cn $i$'s commitment request}{
    $\Nm \gets \Nm \cup i$\;
    Deposit $\Di$ by $i$\;
}
\;
\Attack{Upon $\sum_{i\in\Nm}\vi>\threshold$}{
    $\order \gets \Pm$
}
\;
% \Timer{Upon $\Tn>\Te$ for the first time}{
%     \If{Attack is not successful}{
%         \ForEach{$i \in \Nm$}{
%             Return $\Di$ to $i$
%         }
%     }
% }
\Distribute{Upon receiving the request from $i \in \Nm$ for the first time}{
    %\mhs{rewrite this step with mentioning the oracle: "Query $\mathcal{O}$ to see whether $i$ has executed $\Pm$ and whether the attack has been successful or not."}
    % \zhq{Change to the reward proportional to the voting power?.}\\
    \If{Attack is successful and $i$ has executed $\Pm$}{
        Distribute $\vi\Dm+\Di$ to $i$
    }
    \If{Attack is not successful and $\Tn>\Te$}{
        Distribute $\Di$ to $i$
    }
}
\end{algorithm}

When an \Sprime{} exists, 
an attacker (referred to as a \textit{\magnate}) 
can use it to coordinate collusion between nodes (Assumption 2).
To defeat \Sys{},
the \magnate can create a \bsc
to attract \cns (referred to as \textit{\minions} and denoted by $\Nm$).

We use the double spending attack induced by the \magnate 
as an example to illustrate a possible malicious protocol $\Pm$.
The \magnate needs to use a \bsc to specify the transaction to be reverted,
and order \minions to work on a fork 
that allows the \magnate to double-spend the transaction.
To ensure the attack's success,
the \magnate must guarantee that
each \cn can expect a higher reward, 
\ie, $\Rm > \Rh$.
In the case of this double-spending attack, 
each \cn can still expect to receive the rewards
that a \cn executing $\Ph$ would typically get,
such as block rewards and transaction fees.
However, \cns can now expect to receive
a share of the profits obtained by the \magnate
through double-spending by having the \cns execute $\Pm$.
Therefore, the \magnate has successfully produced
a reward $\Rm$ strictly greater than $\Rh$.
Note that 
the double spending attack is just one example of a malicious protocol. 
As long as the malicious protocol $\Pm$ produces a higher reward, \ie, $\Rm > \Rh$,
it works in our model to defeat rationality.

We outline the design of the \bsc (Algorithm \ref{alg:smartcontract}) 
on \Sprime{} that would enable the \magnate to execute the attack successfully.
All parties must be held accountable if any party defects
to ensure a successful attack in practice.
During the creation of the smart contract,
the \magnate thus deposits $\Dm$ to be shared
among the \cns if the attack is successful.
In addition, when joining the \bsc, each \minion is required to 
deposit $\Di$ to be slashed in case of a defect.
When the \minions' total voting \power exceeds $\threshold$,
the \bsc orders them to execute $\Pm$.
% As each colluded \cn is incentivized to execute $\Pm$
% to obtain a higher reward,
% the \magnate can successfully launch the attack (Assumption 4).
%
The smart contract then can monitor the attack
through the oracle $\Oracle$ (Assumption 2) and upon success, 
returns the deposits with a share of $\Dm$ 
to each \minion.
If the \magnate fails to attract enough \minions to commit the attack,
the deposits are still returned to each \minion
after an expiration time,
making the commitment of the attack by a node risk-free.
The \magnate can also require a large $\Di$
as each colluded \cn expects to get back $\Di$ eventually
(we discuss how to choose $\Di$ in Section \ref{subsec:game1}).
However, 
if a \minion does not follow the order from the \bsc, their deposit is burned, 
thus incentivizing each \minion to follow the order.

Given the \bsc, 
a rational \cn is incentivized to commit and execute $\Pm$,
as, intuitively, every \cn can benefit.
If a \cn does not participate in the attack,
it can, at most, obtain $\Rh$.
However, if a \cn joins the attack,
it will receive at least $\Rh$
with the opportunity of increasing its reward to $\Rm$.
We offer a game theory analysis on \cn collusion in Section \ref{subsec:game1}.
We emphasize that, in this attack,
the \magnate does not even need to control any part of \Sys{} or \Sprime{},
making such an attack doable with a low barrier to launch.
\section{Game Theoretic Analysis}\label{sec:gametheory}

In this section, 
we formalize the behavior of \Sys{} \cns
and examine the possibility of deviation
first without any collusion
and then with collusion through the \bsc on \Sprime.
%
% We assume initially majority of \cns follows the honest protocol and 
%

In the absence of collusion, following the honest protocol
$\Ph$ is a strict \Ne,
meaning that no player will deviate as deviation leads to a lower payoff.
However, in the presence of the \bsc, 
following the protocol $\Ph$ is a weakly dominated strategy 
and thus is no longer a strict \Ne. 
In particular, we identify a sequence of deviations from $\Ph$ 
where each deviant node obtains at least the same payoff as before.
We show that this sequence of deviations ends with following the \bsc orders.
Furthermore, we prove that following the \bsc orders is a strict \Ne,
yielding the maximum payoff of the game.
As a result, no rational player would deviate from it.
% %
% This shows that in the presence of the bribery smart contract, $\Ph$ is not a suitable operating point for the system because in the absence of any penalty for deviation, nodes will eventually deviate from $\Ph$ in seeking more reward.
%

Additionally, we provide a bound on the amount of money that minions
should deposit in the \bsc to ensure that they do not deviate from the
\bsc's commands.
%This condition is crucial for our attack to work.

\subsection{Game 0: Without Collusion}
\label{subsec:game0}
We model the behavior of the \cns in the absence 
of any external factors as a strategic-form game 
$\Gamma_{0} = (\mathcal{N} , \{\Sh , \Sm\}^n , \utility{0}{i}{.}_{i \in \mathcal{N}})$.
$\mathcal{N}=\{1,2,\dots,n\}$ is the set of nodes (players) of the game. Each
node $i$ has \power $\vi$ such that $\sum_{i \in \mathcal{N}} \vi = 1$.
Each player can choose the honest strategy $\Sh$ (corresponding with the protocol $\Ph$) or the malicious strategy $\Sm$ (corresponding with the protocol $\Pm$). 
We denote the chosen strategy of node $i$ by $s_i$.

We define $\Vh$ as the total \power of the nodes that choose strategy $\Sh$ and 
$\Vm$ as the total \power of the nodes which follow $\Sm$, \ie, 
$$\Vh \coloneqq \sum_{i\in \mathcal{N}} \vi 1_{\{s_i=\Sh\}}$$
$$\Vm \coloneqq  \sum_{i\in \mathcal{N}} \vi 1_{\{s_i=\Sm\}} = 1 - \Vh \text{.}$$

Finally, we define the utility function of node $i$, $\utility{0}{i}{.}$, which is a function of $i$'s and other players' strategies as follows:

 \[
    \utility{0}{i}{s_1,\dots,s_n} =
  \begin{cases}
         {\Rhi} & \text{If } s_i= \Sh  \quad \& \quad \Vh > \threshold\\
          {\Rdpi}   & \text{If } s_i= \Sh \quad \& \quad \Vm > \threshold\\
         {\Rdi} & \text{If } s_i= \Sm \quad \& \quad \Vh > \threshold\\
         {\Rmi} & \text{If } s_i= \Sm \quad \& \quad \Vm > \threshold\\
         0 & \text{If } \Vh , \Vm \leq \threshold
        \end{cases}      
  \]
  $$\text{with } \Rhi > \Rdi, \Rmi > \Rhi > 0,\Rmi > \Rdpi \text{.}$$
  Suppose $\Vh > \threshold$, 
  \ie, majority \power is dedicated to the strategy $\Sh$, 
  player $i$ obtains reward $\Rhi$ by following $\Sh$ 
  and obtains $\Rdi$ otherwise. 
  % $\Rdi$ is smaller than $\Rhi$ and can be even negative 
  % (as a means of punishment).
 %
 Similarly, when the majority adopts $\Sm$, 
 player $i$ obtains reward $\Rmi$ by following $\Sm$ 
 and gets $\Rdpi$ otherwise. 
 % $\Rdpi$ is smaller than $\Rmi$.
 We assume that $\Rmi > \Rhi$ (Assumption 4).
 If both $\Vh$ and $\Vm$ are smaller than $\threshold$, 
 all the nodes receive a payoff of $0$ (Assumption 5).

\begin{theorem}\label{theorem:game0}
    In the strategic-form game $\Gamma_0$ if  $\; \forall i \in \mathcal{N}$, ${\Rdi} < {\Rhi}$ and $\max_{i \in \N} \vi \leq \threshold$, the strategy $\Sh$ is a strict \NE.
\end{theorem}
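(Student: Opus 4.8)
The plan is to check the definition of a strict Nash equilibrium directly at the profile in which every node plays $\Sh$: for each node $i$, I must show that deviating to $\Sm$ strictly lowers $i$'s utility when all other nodes keep playing $\Sh$. Since every node has only the two pure strategies $\{\Sh,\Sm\}$, this amounts to one strict inequality per player, which I would obtain from a short case split on the resulting value of $\Vh$.

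First I would pin down the on-path payoff. If $i$ plays $\Sh$ while all others do too, then $\Vh=\sum_{j\in\N}v_j=1$. Using $\threshold<1$ (which must hold, since otherwise no subset of nodes could ever satisfy the condition $\sum_{i}v_i>\threshold$ needed to run $\Ph$ in Assumption 3), we land in the first branch of $\utility{0}{i}{\cdot}$ and node $i$ receives $\Rhi$. Next I would bound the deviation payoff: if $i$ switches to $\Sm$ while the others stay honest, then $\Vm=v_i$ and $\Vh=1-v_i$. The hypothesis $\max_{j\in\N}v_j\le\threshold$ gives $\Vm=v_i\le\threshold$, which rules out the only branch ($\Rmi$) in which a lone deviator could gain. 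Exactly two cases survive: either $\Vh=1-v_i>\threshold$, in which case $i$'s payoff is $\Rdi$; or $\Vh\le\threshold$ as well, in which case the payoff is $0$. By the theorem's hypothesis $\Rdi<\Rhi$, and since $\Rhi>0$ also $0<\Rhi$, so in both surviving cases the deviation is strictly worse than $\Rhi$.

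Putting the two steps together yields $\utility{0}{i}{\Sh,s^{*}_{-i}}=\Rhi>\utility{0}{i}{\Sm,s^{*}_{-i}}$ for every $i$, where $s^{*}_{-i}$ denotes all other nodes playing $\Sh$; this is precisely the definition of the all-$\Sh$ profile being a strict Nash equilibrium. I do not expect a genuine obstacle here; the only point requiring care is that a unilateral deviation need not preserve $\Vh>\threshold$ (if $\threshold$ is large relative to $1-v_i$), so the argument cannot lean only on the $\Rdi$ branch and must also invoke $\Rhi>0$ to dominate the liveness-failure payoff $0$ — and, dually, that the assumption $\max_{j}v_j\le\threshold$ is doing exactly the job of excluding the single profitable deviation branch $\Rmi$.
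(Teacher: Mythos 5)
Your proof is correct and follows essentially the same route as the paper's: establish the on-path payoff $\Rhi$, then split the unilateral deviation into the case $1-v_i>\threshold$ (payoff $\Rdi<\Rhi$) and the case $1-v_i\le\threshold$ (where $v_i\le\threshold$ forces the liveness-failure payoff $0<\Rhi$). Your explicit remarks that $\threshold<1$ is needed for the on-path branch and that $\max_j v_j\le\threshold$ is precisely what excludes the $\Rmi$ branch are the same observations the paper makes, just stated more carefully.
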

\begin{proof}
    We should prove that when all nodes play strategy, $\Sh$, and an arbitrary node $i$ deviates to $\Sm$, $i$ obtains less payoff. We use overline to denote a variable if $i$ deviates.

    When everybody plays $\Sh$, $\Vh = 1$,
    and if $i$ deviates then $\overline{\Vh} = 1- \vi$. 
    One of the following two cases will occur:
    \begin{itemize}
        \item If $\vi < 1- \threshold$, $\overline{\Vh} > \threshold$;
        therefore, even if $i$ deviates, $\Ph$ executes,
        and $i$ gets $\Rdi$ which is strictly less than $\Rhi$. 
        \item If $\vi \geq 1- \threshold$, $\overline{\Vh} \leq \threshold$ and $\Ph$ does not execute with enough \power in \Sys if $i$ deviates.
        As we assumed that $\vi \leq \threshold$ and $i$ is the only player that plays $\Pm$, we will have $\overline{\Vm} = \vi < \threshold$; therefore, $\Pm$ executes with enough nodes neither and every \cn, including $i$, receives utility $0$. 
        As $\Rhi > 0$, $i$ gets less payoff if deviates.
    \end{itemize}
\end{proof}
Theorem \ref{theorem:game0} implies that in the absence of any 
external factors, given an initial honest behavior in \Sys{},
deviating from $\Ph$ has strictly less utility.
Therefore, nodes do not deviate from the honest protocol.

\subsection{Game 1: Magnate-Coordinated Collusion}
\label{subsec:game1}
We define Game 
$\Gamma_{1} = (\mathcal{N} , \{\Sh , \Smp\}^n , \utility{1}{i}{.}_{i \in \mathcal{N}})$ 
to describe \Sys in the presence of an external factor: 
the \bsc (Algorithm \ref{alg:smartcontract}). 
 Each node has two strategies $\Sh$, $\Smp$. $\Sh$ is the honest strategy as described before.
 $\Smp$ denotes the strategy of committing to the \bsc and following its commands. 
We can interpret $\Smp$ as a colluding version of $\Sm$ 
which nodes only run $\Pm$ if they are sure that enough voting \power 
is dedicated to $\Pm$.

Similarly, we denote the overall \power of players who choose $\Sh$ by $\Vh$;
furthermore, we denote the overall \power of minions (players who choose strategy $\Smp$) by $\Vmp$ with relation $\Vh + \Vmp = 1$.
Note that $\Vmp$ does not necessarily represent the real \power dedicated to $\Pm$ because if $\Vmp \leq \threshold$ then the \bsc orders minions to follow $\Ph$ and no one follows $\Pm$;
only when $\Vmp > \threshold$,
the \bsc orders minions to follow the protocol $\Pm$.
%
% Therefore we still need to use another variable $\Vm$ to represent the real \power dedicated to the protocol $\Pm$ after \bsc set its order.
% %
% Equation \ref{eq:summary} summerizes the relation between $\Vmp$ and $\Vm$.

% \begin{equation}\label{eq:summary}
%     \begin{cases}
%          \text{If } \; \Vmp \geq \threshold \; \rightarrow \; \Pm \text{ Executes } \; \text{And } \; \Vm = \Vmp  \\
%          \text{If } \; \Vmp < \threshold \; \rightarrow \; \Ph \; \text{ Executes } \; \text{And } \; \Vm = \; 0  \\
%     \end{cases}
% \end{equation}

%
To incentivize \minions to follow the \bsc's orders unconditionally,
the \bsc requires the minions to deposit some money at the time of commitment.
% that they follow the orders.
% In the last paragraph, we implicitly assumed that minions unconditionally follow the \bsc orders. Now, we discuss under what circumstances this assumption is valid.
%
Magnate should choose a large enough deposit 
such that it rules out any order violation.
% its  the possibility of deviation by any rational minion.
In Theorem \ref{theorem:deposit},
we find a deposit function that satisfies this necessity.
\begin{theorem}\label{theorem:deposit}
    If the \bsc sets the deposit for all the minions as described in the equation \ref{eq:deposit}, under no circumstances any minion has the incentive to deviate from the \bsc commands.
    \begin{equation}\label{eq:deposit}
       D > \max_{i \in \N } (\Rmi + \max{\{|\Rdi| , |\Rdpi|\})}   
    \end{equation}
    
\end{theorem}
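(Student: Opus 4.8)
The plan is to show that a committed minion $i$ --- one playing $\Smp$, i.e., one that has committed to the \bsc and posted the deposit $D$ --- can never strictly increase its payoff by disobeying the command currently issued by the \bsc, no matter what the other players do. The argument rests on two bookkeeping facts. First, $i$'s reward inside \Sys{} always lies in the finite set $\{0,\Rhi,\Rmi,\Rdi,\Rdpi\}$, hence between $-\max\{|\Rdi|,|\Rdpi|\}$ and $\Rmi$; here we use $\Rmi>\Rhi>0$, $\Rmi>\Rdpi$ (Assumption 4), and $\Rhi>\Rdi$ (Assumption 3). Second, by the \bsc's slashing rule (Algorithm~\ref{alg:smartcontract}) a minion that disobeys an order is slashed and recovers nothing, whereas a minion that obeys always recovers its deposit $D$ in full: through the successful-attack branch of \texttt{Distribute} (then with the extra, nonnegative share $\vi\Dm$) when the attack succeeds, and otherwise through the expiration branch once $\Tn>\Te$. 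The perfect oracle $\Oracle$ (Assumption 2) guarantees the \bsc pays out according to what actually happened, so a disobedient minion cannot fraudulently claim the successful-attack payout.

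With these facts in hand I would split on the current \bsc command. \emph{Case 1: the order is still $\Ph$.} This is exactly the case in which the committed power never exceeded $\threshold$, so the total power that could possibly run $\Pm$ is at most $\threshold$ and $\Pm$ can never execute. Obeying, $i$ runs $\Ph$ and, depending on the others, either $\Ph$ executes and $i$ earns $\Rhi$ or \Sys{} loses liveness (Assumption 5) and $i$ earns $0$; disobeying, $i$ runs $\Pm$ or idles and earns at most $\Rdi$ (or $0$). In the matched scenario, if obeying already loses liveness then so does disobeying, since abandoning $\Ph$ only reduces the power backing it; and whenever $\Ph$ executes, $\Rhi>\Rdi$ and $\Rhi>0$. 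As the deposit is returned in either case (the attack is ``not successful''), obeying weakly dominates, strictly unless liveness is lost regardless of $i$'s move.

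\emph{Case 2: the order is $\Pm$.} Obeying, $i$ runs $\Pm$; its in-protocol reward is at least $\min\{\Rmi,\Rdi,0\}\ge-\max\{|\Rdi|,|\Rdpi|\}$ and it recovers its deposit (with a nonnegative bonus), so its total payoff is at least $-\max\{|\Rdi|,|\Rdpi|\}$ whatever the others do. Disobeying, $i$ forfeits $D$ and its in-protocol reward is at most $\Rmi$, so its total payoff is at most $\Rmi-D$. Hence the gain from obeying is at least $D-\Rmi-\max\{|\Rdi|,|\Rdpi|\}$, which is strictly positive by \eqref{eq:deposit}, so obeying strictly dominates. Taking the maximum over $i$ in \eqref{eq:deposit} lets a single uniform $D$ work for every minion simultaneously.

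I expect the delicate point to be the second bookkeeping fact in the sub-case of Case 2 where so many \emph{other} minions disobey that the attack actually fails: one must read off from the \bsc's semantics that an \emph{obedient} minion still recovers its deposit in that event (the attack is then simply ``not successful'', so the expiration branch applies) while a disobedient one is slashed regardless --- otherwise the subtraction in Case 2 could go the wrong way. A minor point worth stating explicitly is that ``deviating from the command'' ranges over running the opposite protocol \emph{and} over idling or arbitrary behavior, all of which are subsumed by $-\max\{|\Rdi|,|\Rdpi|\}\le(\text{in-protocol reward})\le\Rmi$; the liveness-loss boundary cases are handled by the Assumption 5 convention that all rewards are then $0$.
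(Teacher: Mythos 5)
Your proposal is correct and follows essentially the same route as the paper's proof: bound the worst-case payoff swing from deviation by $\Rmi + \max\{|\Rdi|,|\Rdpi|\}$ (using $\Rmi>\Rhi>0$ so that $\Rmi$ is the largest achievable reward and $-\max\{|\Rdi|,|\Rdpi|\}$ the smallest) and require the forfeited deposit to exceed it. You are somewhat more careful than the paper in spelling out the contract semantics --- that an obedient minion always recovers its deposit via either the success or the expiration branch, and that idling or arbitrary off-protocol behavior is also covered by the reward bounds --- but the core inequality and argument are identical.
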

\begin{proof}
    Consider node $i$ that has committed to the \bsc and has deposited value $\Di$. $i$ receives a payoff $x$ if it follows the \bsc commands and gets a payoff $y-\Di$ if it deviates from the commands where $x, y$ are valid utility values, i.e., $x , y \in \{\Rmi,\Rhi,\Rdi,\Rdpi\}$ and their value depend on the strategy of other players.
    Our objective is to select $\Di$ in such a way that deviates from the
    commands of the \bsc are always more detrimental than any other
    strategy, regardless of what strategies other players are pursuing.
    Hence, the following should hold for any valid $x,y$:
    $$y-\Di < x \rightarrow \Di > y - x$$
    We know that as $\Rhi,\Rmi > 0$, $ (\max{\{\Rhi , \Rmi\}} + \max{\{|\Rdi| , |\Rdpi|\}}) =  \Rmi + \max{\{|\Rdi| , |\Rdpi|\}} $ is an upper bound on $y - x$; therefore, it suffices to choose $D >  \max_{i \in \N } (\Rmi + \max{\{|\Rdi| , |\Rdpi|\})}$ 
\end{proof}

The implication of Theorem \ref{theorem:deposit} is that if a rational node commits to the \bsc, it always follows the \bsc commands. Therefore there are only two possible strategies for the nodes, either playing the honest strategy or committing all of their \power to the \bsc and following its orders. 
%Note that nodes never benefit from splitting their \power.
If we use a deposit function that does not satisfy equation \ref{eq:deposit}, in some cases, some minions might benefit by deviating from the \bsc orders and dedicating less \power to the specified protocol by the \bsc even if they have committed to the \bsc.
Thus Theorem \ref{theorem:deposit} is essential for defining $\Gamma_1$. 
Now we can define the utility function of the game $\Gamma_1$ as follows:

\[
    \utility{1}{i}{s_1,\dots,s_n} =
  \begin{cases}
         {\Rhi} & \text{If } s_i= \Sh  \quad ~\& \quad \Vh > \threshold\\
          {\Rdpi}   & \text{If } s_i= \Sh \quad ~\& \quad \Vmp > \threshold\\
         {\Rhi} & \text{If } s_i= \Smp \quad \& \quad \Vh > \threshold\\
         {\Rmi} & \text{If } s_i= \Smp \quad \& \quad \Vmp > \threshold\\
         {\Rhi} & \text{If } \Vh,\Vmp \leq \threshold
        \end{cases}
  \]
   $$\text{with } \Rmi > \Rhi > 0,\Rmi > \Rdpi \text{.}$$

 The key difference between game $\Gamma_1$ and $\Gamma_0$ 
 is that the minions are now colluding and as a result, 
 they will not execute protocol $\Pm$ when $\Vh > \threshold$ to avoid
 the penalty $\Rdi$.

\begin{theorem}\label{theorem:game1}
    In the strategic-form game $\Gamma_1$, the strategy $\Sh$ is not a strict \Ne, and even further, if any subset of nodes deviates from $\Sh$ to $\Smp$, the deviating nodes always get at least the same payoff as if they were playing strategy $\Sh$.
\end{theorem}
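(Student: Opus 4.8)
The plan is to measure every post-deviation payoff against the baseline payoff $\Rhi$ that each node enjoys in the all-honest profile, and show it never drops below it. First I would record the baseline: if every node plays $\Sh$ then $\Vh = 1 > \threshold$, so the first branch of $\utility{1}{i}{\cdot}$ assigns every node exactly $\Rhi$.

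For the ``not a strict \Ne'' half, I would let a single node $i$ unilaterally switch from $\Sh$ to $\Smp$. By Assumption 5 we have $\vi < \threshold$, so afterwards $\Vmp = \vi \le \threshold$ and the \bsc never issues the order to run $\Pm$. Then either $1-\vi > \threshold$, putting $i$ in the third branch ($s_i=\Smp$, $\Vh>\threshold$), or $1-\vi \le \threshold$, putting it in the fifth branch ($\Vh,\Vmp\le\threshold$); in both cases $i$'s payoff is again $\Rhi$. Because this deviation is payoff-neutral rather than strictly harmful, $\Sh$ is not a strict equilibrium.

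For the general claim I would take an arbitrary deviating set $M \subseteq \N$, write $V \coloneqq \sum_{i\in M}\vi = \Vmp$ and $\Vh = 1-V$, and split on $V$ into three exhaustive cases. If $V > \threshold$ then $\Vmp>\threshold$ and the fourth branch gives every deviator $\Rmi > \Rhi$. If $V < 1-\threshold$ then $\Vh>\threshold$ and the third branch gives every deviator $\Rhi$. If $1-\threshold \le V \le \threshold$ --- a nonempty range precisely because $\threshold \ge \frac12$ (Assumption 5) --- then $\Vh\le\threshold$ and $\Vmp\le\threshold$, so the fifth branch gives everyone $\Rhi$. Hence a deviator's payoff is always at least $\Rhi$, with strict gain once $V>\threshold$.

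The step needing the most care is the middle region $1-\threshold \le V \le \threshold$: this is exactly where the modeling choice distinguishing $\Gamma_1$ from $\Gamma_0$ --- that a loss of liveness still pays $\Rhi$ (committed minions are ordered back to $\Ph$ and recover their deposits) rather than $0$ --- is load-bearing, together with $\threshold\ge\frac12$. I would also note that the branches are mutually exclusive, since $\Vh>\threshold$ forces $\Vmp = 1-\Vh < 1-\threshold \le \threshold$, so no tie-breaking convention is needed in the case analysis.
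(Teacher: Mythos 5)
Your proposal is correct and follows essentially the same route as the paper: establish the all-honest baseline $\Rhi$, then case-split on whether the deviating coalition's total power $\Vmp$ exceeds $\threshold$ (deviators get $\Rmi > \Rhi$) or not (deviators get $\Rhi$); your further subdivision of the sub-threshold case into the $\Vh > \threshold$ and liveness-loss branches just makes explicit which line of the utility function applies, where the paper reasons directly that the \bsc orders $\Ph$ and everyone earns $\Rhi$. The only cosmetic difference is that you treat the ``not a strict \NE'' claim via an explicit singleton deviation, whereas the paper leaves it as the special case of the general statement.
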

\begin{proof}
    Without the deviation $\Vh = 1$, $\Vmp = 0$ and every node $i$ obtains reward $\Rhi$. We denote the set of nodes that deviate from $\Sh$ to $\Smp$ as $\Nm$, while the rest of the nodes $\N - \Nm$ play strategy $\Sh$. 
    We use the overlined variable to show the value of that variable if deviation takes place.
    \begin{itemize}
        \item If the overall \power of $\Nm$ is equal or less than $\threshold$, i.e., $\overline{\Vmp} \leq \threshold$,
        the \bsc will order running protocol $\Ph$; therefore, the members of $\Nm$ will run $\Ph$. 
        As other nodes also run $\Ph$, all the nodes no matter if they are a member of $\Nm$ or not will get the same reward as before, i.e., $\Rhi$.
        
        \item If the overall \power of $\Nm$ is greater than $\threshold$, i.e., $\overline{\Vmp} > \threshold$,
        % in case of deviation 
        the \bsc will order running protocol $\Pm$; therefore, the members of $\Nm$ will run $\Pm$ and will obtain reward $\Rmi$, and the rest of the nodes will get the utility $\Rdpi$. 
        As $\Rdpi < \Rmi$, 
        the nodes that deviate will get a better payoff, 
        and the nodes that do not deviate are better off by deviating.
        
    \end{itemize}
\end{proof}

\begin{theorem}\label{theorem:Smp-strict}
    In the strategic-form game $\Gamma_1$, if $\Rdpi < \Rmi$ and $\Rhi < \Rmi$, the strategy $\Smp$ is a strict \NE.
\end{theorem}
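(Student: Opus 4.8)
The plan is to mirror the proof of Theorem~\ref{theorem:game0}: start from the all-$\Smp$ profile, let a single arbitrary node $i$ deviate to $\Sh$, and show its payoff strictly decreases. When every node plays $\Smp$ we have $\Vmp = 1 > \threshold$, so by definition of $\utility{1}{i}{\cdot}$ the \bsc commands $\Pm$ and each node, in particular $i$, receives $\Rmi$. So it suffices to show that any unilateral deviation to $\Sh$ yields strictly less than $\Rmi$.

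Using overlined variables for the post-deviation quantities, a deviation by node $i$ gives $\overline{\Vh} = \vi$ and $\overline{\Vmp} = 1 - \vi$. By Assumption~5 (equivalently $\max_{i\in\N}\vi \le \threshold$) we have $\overline{\Vh} = \vi \le \threshold$, so the branch ``$s_i=\Sh$ \& $\Vh>\threshold$'' that would pay $\Rhi$ via a majority-honest system can never be triggered by this single deviation. Two exhaustive cases remain:
\begin{itemize}
  \item If $\vi < 1 - \threshold$, then $\overline{\Vmp} = 1-\vi > \threshold$, so the \bsc still orders $\Pm$; since $i$ now plays $\Sh$ while $\Vmp > \threshold$, it earns $\Rdpi$, and $\Rdpi < \Rmi$ by hypothesis.
  \item If $\vi \ge 1 - \threshold$, then $\overline{\Vmp} = 1-\vi \le \threshold$ and also $\overline{\Vh} = \vi \le \threshold$, so $\Vh,\Vmp \le \threshold$ and $i$ earns $\Rhi$, with $\Rhi < \Rmi$ by hypothesis.
\end{itemize}
In either case the deviating node's payoff is strictly below $\Rmi$, which is exactly the condition for $\Smp$ (all nodes committing and following the \bsc) to be a strict \NE.

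I do not expect a real obstacle here; the only subtlety is ensuring the lone honest deviator's weight $\vi$ never reaches $\threshold$ — precisely what Assumption~5 provides — so that branch must be invoked explicitly as above. Optionally I would add a remark that, unlike in $\Gamma_0$, the equilibrium payoff $\Rmi$ is also the pointwise maximum of the utility function, so $\Smp$ not only is a strict equilibrium but yields the game's optimal payoff; this strengthens the conclusion from ``no single node deviates'' to ``no coalition of nodes benefits from abandoning the \bsc,'' closing the loop with the deviation sequence of Theorem~\ref{theorem:game1}.
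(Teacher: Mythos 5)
Your proof is correct and follows essentially the same argument as the paper's: start from the all-$\Smp$ profile where everyone earns $\Rmi$, split a unilateral deviation into the cases $\vi < 1-\threshold$ (deviator gets $\Rdpi$) and $\vi \geq 1-\threshold$ (everyone falls into the $\Vh,\Vmp \leq \threshold$ branch and gets $\Rhi$), both strictly below $\Rmi$. Your explicit invocation of Assumption~5 to rule out the $\overline{\Vh} > \threshold$ branch is a detail the paper leaves implicit, but it is the same proof.
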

\begin{proof}
    When all the nodes play $\Smp$ we have $\Vmp = 1$, and every node $i$ obtains reward $\Rmi$. If player $i$ deviates to $\Sh$, one of the following two cases will occur: 
    %(we use overlined variable to represent the value of a variable after $i$'s deviation)
    \begin{itemize}
        \item If $\vi < 1- \threshold$, $\overline{\Vmp} = 1-\vi > \threshold$; thus, the \bsc orders to run $\Pm$ and $i$ will receive $\Rdpi < \Rmi$. 
        
        \item If $\vi \geq 1 - \threshold$, $\overline{\Vmp} = 1-\vi \leq \threshold$; thus, the \bsc orders to follow $\Ph$ 
        and every \cn, as well as $i$, 
        gets the honest reward $\Rhi < \Rmi$.
    \end{itemize}
\end{proof}

\textbf{Implication:}
In a functional system 
where \cn{}s execute the honest protocol without any 
collusion, no \cn has the incentive to deviate. 
However, with collusion,
strategy $\Sh$ becomes a weakly dominated \Ne.
%\footnote{A weakly dominated \Ne is a \Ne where at least one player can deviate and obtain the same reward as if it was playing the \Ne}.
%
Specifically, any colluding subset of \cns
would receive at least the same payoff as before.
Hence,
it is rational for them to deviate
in order to seek a higher payoff.
Once the subset with \power larger than $\threshold$ deviates,
the \cns strictly benefit from deviation (as $\Rmi > \Rhi$);
thus, we expect \Sys to transition to a state where everybody plays $\Smp$.
From this point, as $\Smp$ is a strict \Ne, 
no party will deviate from it.
In summary, we have identified a sequence of deviations 
where each \cn receives at least the same payoff as before, 
and eventually, the system settles into a strict \Ne and remains there.

Coming back to the example of a double-spending attack organized by a magnate, Theorem \ref{theorem:game1} states that starting from a healthy system \Sys{}, if any subset of nodes commit their \power to the \bsc and run the double-spending attack if the \bsc orders so, the minions will never get a less payoff than playing the honest strategy. Moreover, Theorem \ref{theorem:Smp-strict} suggests that starting from a situation where all the nodes commit to the \bsc and execute the double spending attack, if a node deviates and plays the honest strategy, the deviant node gets strictly less payoff after deviation.

\section{Sketch to Break Real-World Blockchain Systems}
\label{sec:realworld}

We illustrate a malicious protocol that
generally exists in real-world blockchain systems,
and then we discuss how we can use it to attack Bitcoin and Ethereum.

\begin{figure}[t!]
    \centering
    \includegraphics[scale=0.30]{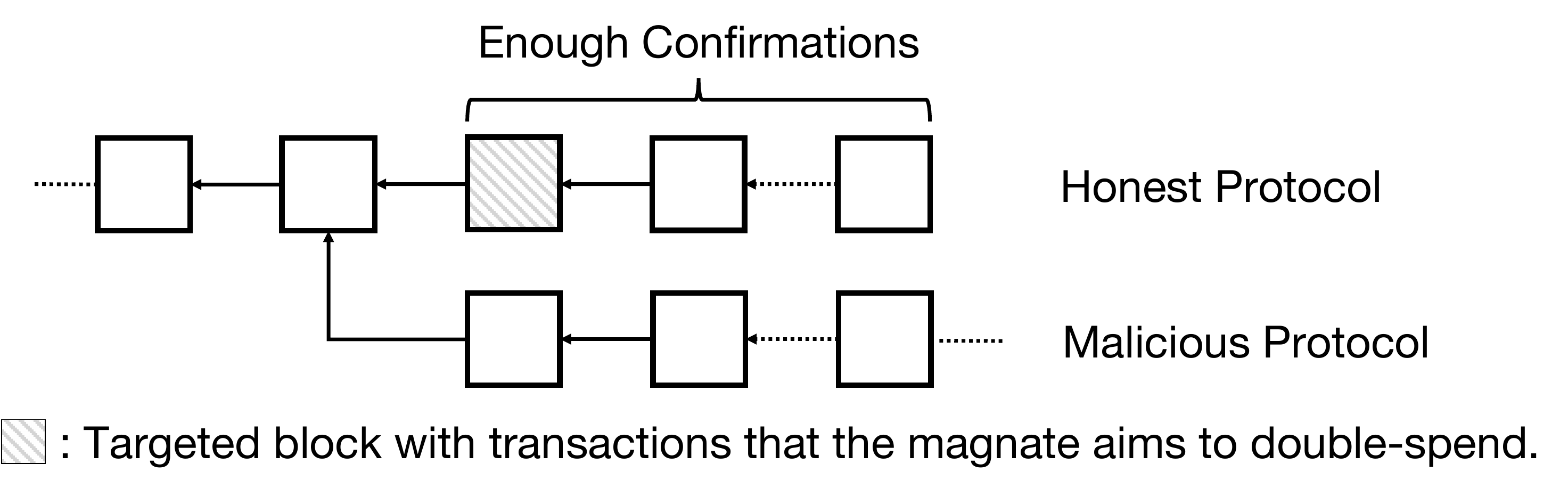}
    \caption{In a real-world blockchain system,
    given an honest protocol $\Ph$, 
    the \magnate can always construct a malicious protocol $\Pm$
    with a higher total reward by double-spending transactions 
    through reverting a confirmed block.}
    \label{fig:pmal}
\end{figure}

\subsection{Double-Spending as Malicious Protocol}
\label{subsec:pmal}

% \begin{algorithm}[t!]
% \caption{Malicious protocol $\Pm$}
% \label{alg:pm}

% \SetCommentSty{textnormal}
% \DontPrintSemicolon
% \SetKwInOut{Input}{input}

% \BlankLine

% \Input{$b$, block with transactions that \magnate aims to double-spend.}

% \eIf{$\sum_{i \in \Nm} \vi > \threshold$ and $b$ has confirmed in \Sys{}}{
%     Revert $b$
% }{
%     Execute $\Ph$
% }
% \end{algorithm}

We present there always exists a malicious protocol $\Pm$
enabling double-spend attacks in \Sys{},
illustrated in Figure \ref{fig:pmal}.
A colluded \cn executes the $\Pm$
when the block that contains the target transactions receives enough block confirmations.
The protocol aims to revert the block
by working a fork, 
which allows the \magnate to double-spend the transactions
confirmed previously.
When the fork becomes the valid chain,
$\Pm$ finishes. 

\subsection{Breaking the Longest-Chain Rule in Bitcoin}
Bitcoin's protocol incentivizes the \cns to adopt the longest-chain rule
when mining a new block.
This behavior assumption applies to the rationality principle:
As long as more than 50\% of the \cns follow the longest-chain rule,
any rule-deviating \cn would reduce its expected chance
to mine new accepted blocks and thus its expected reward.
Therefore, the longest-chain rule is consistent with our Assumption 3.
%that $ph$ describes a \cn should always work on the longest-chain.

We now sketch the attacking method based on double-spending.
Once a \magnate selects a transaction to double spend,
they create a \bsc with the malicious protocol
$\Pm$ in an attempt to reverse the transaction by creating a fork.
The \magnate is required to put up a deposit $\Dm$
proportional to the expected reward for double spending
this transaction.
Similar to an auction contract, the \magnate also
specifies a time $\Te$ when the contract expires.

Once the \bsc is published,
any rational \cn is incentivized to join the \bsc
and, when enough \cns have joined,
follow $\Pm$
due to the expected reward increase over following $\Ph$.
The \bsc requires \cns to deposit
$\Di$ in case they defect.
$\Di$ needs to be more than the block rewards and transactions fees
that can be reverted by the fork.
If the \bsc successfully attracts
more than 50\% of the \cns,
then the \cns launches the attack.
While launching the attack, each \cn submits
proofs to the \bsc that it is 
following $\Pm$.
Since Bitcoin uses Proof-of-Work as
the underlying consensus algorithm,
proofs can be hash results that satisfy
a difficulty requirement, similar to how
miners prove their work to a mining pool~\cite{lewenberg2015bitcoin}.

\subsection{Breaking the \DS Protocol in Ethereum}
In the recent upgrade of the Merge~\cite{ethereumMerge}, 
Ethereum changed its consensus algorithm
to \PoS.
To incentivize honest \cns and punish malicious
ones, Ethereum adopts a \ds protocol,
where each \cn must deposit some
cryptocurrency.
A \cn can withdraw its deposit entirety
when exiting the consensus group
if no other \cn can prove that it violated the protocol.
Ethereum utilizes the \ds protocol 
to punish the double-sign behavior,
\ie, a \cn signs two blocks with the same height,
thus mitigating the double-spending issues.

The \magnate can adopt a similar strategy 
to break the \ds protocol.
The \magnate still tries to double spend transactions
to create additional rewards for the colluded \cns.
The colluded \cns need to work on the fork indicated by the \magnate
after the targeted transaction is confirmed.
By doing so, 
each colluded \cn needs to sign two blocks with the same height,
a behavior violating the \ds protocol.
Thus, the colluded \cn is subject to be slashed 
if anyone submits the proof to the blockchain.
However, as long as all the colluded \cns do not allow the proof to be included
on the blockchain in the first place,
the slashing will never happen.

To prove that a \cn has executed the $\Pm$,
the \bsc has to verify
that it has voted to the fork indicated by the \magnate
and has not voted for any block with proof
potentially slashing other colluded \cns
before exiting the consensus group.
The second condition effetely delays the verification time;
However, as long as the \magnate attracts enough \cns,
the \magnate is in total control of the blockchain
before the colluded \cns exit the consensus group.
\section{Discussion}

%This section discusses the implications and limitations of our work.
%
Our work reveals the weakness of blockchain systems
that depend on rationality for security.
Despite this weakness,
to the best of our knowledge,
no major cryptocurrency has suffered from rational attacks~\cite{wang2021forkdec,badertscher2018but}, 
even with the usual concentration of voting power 
in the hands of a few~\cite{mariem2020all}.

The absence of such an attack may result from other factors.
First, it may be because the attack is hard to communicate and coordinate,
\ie, every \cn must be aware of such a \bsc,
rendering such attacks hard to be realized in real-world blockchain systems.
Second, cryptocurrency stakeholders may be unwilling to conduct 
such an attack due to the potential loss of faith in the 
cryptocurrency market,
leading to significant price drops;
thus, it is irrational to launch such an attack 
if we consider the monetary value of the cryptocurrency~\cite{badertscher2018but}.
Finally, some actors may choose not to participate
in such an attack out of altruism,
even though the strategy does not maximize their profits.

Nevertheless, our theoretical conclusion is that rationality
is insufficient for security;
thus, its use results in a false sense of security,
and such an attack could happen at any moment.
Our work implies that
to build a secure blockchain system,
we have to rely on non-rational assumptions, 
such as threshold assumptions
(\ie, a certain percentage of the \cns are truly honest,
even though this would lead to profit loss)
and police enforcement
(\eg, \cns would face legal prosecution if not following the honest protocol).
% As it is unclear whether we could design permissionless blockchain 
% without any rationality assumptions,
% our work might imply that permissionless blockchain design has inherent security flaws.

\section{Related Work}

The earliest work attacking blockchain rationality is selfish mining,
demonstrating that
the Bitcoin mining protocol is not incentive-compatible~\cite{eyal2018majority}. 
They prove that, in the current Bitcoin architecture, 
even if the adversary controls
less than 50\% of the hashing power, 
it can launch the attack successfully and earn more benefits than honest behavior.
% The authors then propose a modification of Bitcoin protocol,
% with the assumption that the pools smaller than 25\% of the total mining power,
% that prevents the selfish mining attack.

Following the selfish mining attacks, 
several works attack \bc incentive mechanisms,
such as whale attacks~\cite{liao2017incentivizing},
block withholding~\cite{Eyal2015},
stubborn mining~\cite{nayak2016stubborn},
transaction withholding~\cite{Babaioff2012},
empty block mining~\cite{Houy2014}, and
fork after withholding~\cite{Know2017}.
However, these previous works only discuss the attacks in a specific protocol.

% Whale attacks, in which an attacker issues an off-the-blockchain transaction with enormous transaction fee
% to incentivize miners to fork the current chain~\cite{liao2017incentivizing}.

% In a block withholding attack, 
% a mining pool master rewards a miner for attempting to find a new block,
% but the miner does not contribute to the pool's income as new blocks are discarded~\cite{velner2017smart}.

%When cryptocurrencies mine their own business
% https://people.cs.uchicago.edu/~teutsch/papers/repurposing_miners.pdf

Ford et al. first outline
a general method to attack rationality,
arguing that rationality is self-defeating
when analyzing rationality in the context of a large ecosystem~\cite{ford2019rationality}.
Although the attack generally applies to any blockchain system,
it builds upon some non-existing financial instruments,
indicating the attack is not practical any time soon.
To our knowledge, 
our work is the first practical and general
attack on rationality assumptions for various blockchain systems.

Finally, utilizing smart control to incentivize malicious behaviors
is a well-known strategy in the blockchain space.
McCorry et al. present various smart contracts 
that enable bribing of miners to achieve a strategy 
that benefits the briber~\cite{mccorry2018smart}.
Juels et al. propose criminal smart contracts
that encourage the leakage of confidential information~\cite{juels2015ring}.

% {\color{red} @Haoqian check out these papers:}
% \cite{motaqy2021bet}
% \cite{judmayer2019pay}
% \cite{judmayer2021sok}

%%%%%%%%%%%%%%%%%%%%%%%% Move below's URLs into a paragrahy in related work section.

% \url{https://eprint.iacr.org/2018/581.pdf}

% \url{https://dl.acm.org/doi/pdf/10.1145/3133956.3134032casa_token=L31tjmDUh0wAAAAA:mUPhREzHbS2IWzlS3R2nje1UWM9sZojXHVItpILUiPOisS3RU_dCWYS8h5F17utom4PgKkD1u1SGpQ}

% \url{https://moodlearchive.epfl.ch/2020-2021/pluginfile.php/2734622/mod_resource/content/1/Gyges.pdf}

% \url{https://arxiv.org/abs/1607.02420}
\section{Conclusion}

This paper proposes an attacking method that
breaks the rationality assumptions in various blockchain systems.
The attack utilizes an out-of-band smart contract to 
establish the collusion between \cns coordinated by a \magnate.
Unlike previous works
which attack rationality for a specific protocol
or rely on non-existent financial instruments,
our method is more general and practical.
Our result indicates that
the rationality assumptions do not increase the system's security
and might provide a false sense of security under the flawed analysis.
\section*{Acknowledgments}

% The authors wish to thank 
% for their extremely helpful comments and suggestions. 
This research was supported in part
by U.S. Office of Naval Research grant N00014-19-1-2361,
the AXA Research Fund, 
the PAIDIT project funded by ICRC, 
the IC3-Ethereum Fund,
Algorand Centres of Excellence programme managed by Algorand Foundation,
and armasuisse Science and Technology.
Any opinions, findings, and conclusions or recommendations 
expressed in this material are those of the author(s) 
and do not necessarily reflect 
the views of the funding sources.

% ---- Bibliography ----
%
% BibTeX users should specify bibliography style 'splncs04'.
% References will then be sorted and formatted in the correct style.
%
\bibliographystyle{splncs04}
\bibliography{main.bbl}

\end{document}